\documentclass[twocolumn,amsmath,amssymb,10pt,pra]{revtex4-1}
\usepackage[utf8]{inputenc}
\usepackage[T1]{fontenc}
\usepackage{amsmath}
\usepackage{amsfonts}
\usepackage{graphicx}
\usepackage{yfonts}
\usepackage{color}
\usepackage[normalem]{ulem}
\usepackage{amsthm}
\usepackage{bm}
\usepackage{bbm}
\usepackage{mathtools}
\usepackage{array}
\usepackage{enumitem}

\usepackage{color}

 \newcommand{\der}{\mathrm{d}}

\def\<{\langle}
\def\>{\rangle}

\newcommand{\Tr}{\mathrm{Tr}}
\def\oper{{\mathchoice{\rm 1\mskip-4mu l}{\rm 1\mskip-4mu l}
{\rm 1\mskip-4.5mu l}{\rm 1\mskip-5mu l}}}
\usepackage{caption}
\captionsetup{justification=raggedright,singlelinecheck=false}
\captionsetup[figure]{name={Fig.},labelsep=period}

\newtheorem{Lemma}{Lemma}

\newtheorem{Proposition}{Proposition}

\begin{document}

\title{\bf Geometry on the manifold of Gaussian quantum channels}
\author{Katarzyna Siudzi{\'n}ska}
\affiliation{Institute of Physics, Faculty of Physics, Astronomy and Informatics \\  Nicolaus Copernicus University, Grudzi\k{a}dzka 5/7, 87--100 Toru{\'n}, Poland}
\author{Kimmo Luoma}
\author{Walter T. Strunz}
\affiliation{Institut f\"{u}r Theoretische Physik, Technische Universit\"{a}t Dresden, D-01062, Dresden, Germany}

\begin{abstract}
In the space of quantum channels, we establish the geometry that 
allows us to make statistical predictions about relative volumes of entanglement breaking channels among all the Gaussian quantum channels. The underlying metric is constructed using the Choi-Jamio{\l}kowski isomorphism between 
the continuous-variable Gaussian states and channels. This construction involves the Hilbert-Schmidt distance in quantum state space. The volume element of the one-mode Gaussian channels can be expressed in terms of local symplectic invariants. We analytically compute the relative volumes of the one-mode Gaussian entanglement breaking and incompatibility breaking channels. Finally, we show that, when given the purities of the Choi-Jamio{\l}kowski state of the channel, one can determine whether or not such channel is incompatibility breaking.
\end{abstract}

\maketitle

\section{Introduction}\label{sec:introduction}
Quantum correlations, like non-locality~\cite{RevModPhys.86.419}, 
steering~\cite{2019arXiv190306663U}, and entanglement~\cite{RevModPhys.81.865},
are very often used as the key resources in quantum information tasks,
such as quantum state discrimination~\cite{Bae_2015} and key distribution~\cite{Razavi:19}. 
Unfortunately, quantum systems are never perfectly isolated from the 
external influences, which leads to a
harmful dissipation and decoherence that ultimately destroy quantum 
correlations~\cite{Yu598}. These dynamical processes for the open systems
can be described using quantum channels, which are completely positive,
trace-preserving maps from quantum states to quantum
states~\cite{breuer2002theory}. Exceptionally detrimental kinds of open
quantum system dynamics are described by the so-called {\it
entanglement breaking} (EB)~\cite{doi:10.1142/S0129055X03001709} and
{\it incompatibility breaking} (ICB)~\cite{Heinosaari_2015} quantum
channels. The former maps any entangled input state to a separable
output. The latter, on the other hand, has the corresponding dual map
that, when applied to a pair of incompatible (not jointly measurable)
observables, maps them to a pair of compatible (jointly measurable)
observables~\cite{Heinosaari_2016}. Therefore, it is of general
interest to determine the likelihood to  encounter such channels in the 
space of all channels, measured in terms of ratios of their corresponding 
volumes.

In this article, we focus on continuous variable (CV) systems; i.e.,
the systems described with the help of the canonical position and momentum
operators.  Laboratories equipped with linear optical elements and
photodetectors can routinely prepare, manipulate, and perform quantum
measurements on the states of such
systems~\cite{Ferraro2005,Olivares2012}. An important special class of
states for continuous variable systems is the set of Gaussian states,
characterized by the Gaussian Wigner
function~\cite{Holevo,Adesso2014,Olivares2012}. Closely associated to
them is the set of Gaussian quantum channels, mapping any Gaussian
state to a Gaussian state. These sets of states and channels form 
basic building blocks for current experiments on photonic systems in
the field of quantum
information~\cite{Braunstein2005,Weedbrook2012}. The main reasons for
their appeal is the fact that the Gaussianity-preserving unitary
operations can be implemented in linear optics, and Gaussian systems
are relatively easy to handle mathematically.

In this article, we study 
the geometry of the manifold of Gaussian quantum channels. 
We provide a rigorous route to investigate how likely it is, among all Gaussian quantum channels,
to encounter a channel that is either entanglement breaking \cite{Holevo2008} or
incompatibility breaking~\cite{doi:10.1063/1.4928044}. So far, the
investigations on the information geometry in the Gaussian domain have
been focused on the geometry of the state space
\cite{Link2015,Felice2017} and the typical properties of quantum
correlations~\cite{Dahlsten2014,Serafini2007,Sohr_2018}. In~\cite{Monras2010},
first steps are taken to study the geometry of the Gaussian
quantum channels. The main hindrance for further development have been
the technical difficulties that are encountered when one tries to
formulate the Choi-Jamio{\l}kowski (CJ)
isomorphism~\cite{CHOI1975285,JAMIOLKOWSKI1972275} for continuous
variable systems \cite{Holevo2011,Holevo2008}.

Recently, new results have shed some light on how to formulate the
Choi-Jamio{\l}kowski correspondence between Gaussian states and
channels in such a way that divergence problems do not
occur~\cite{Kiukas2017,PhysRevA.97.022339}. In this article, we use
the approach developed in \cite{Kiukas2017} for the
Choi-Jamio{\l}kowski isomorphism in combination with the results on
the geometry of Gaussian states in \cite{Sohr_2018} to investigate the
geometry of Gaussian quantum channels. In particular, we report the
likelihood of encountering a one-mode entanglement or incompatibility
breaking channel among all the one-mode Gaussian channels. It should be
noted that such results, in general, depend on the choice of the metric. 
Here, as the metric on the space of channels is defined with the help
of the Choi-Jamio{\l}kowski isomorphism, it will also depend on the reference state of that isomorphism.

The rest of the article is organized as follows.  In Section
\ref{sec:gaussian-states} we provide a quick review on the main
properties of the Gaussian states. Then, in Section
\ref{sec:gauss-chann-choi}, we introduce the notion of the Gaussian
channels. We present the generalization of the Choi-Jamio{\l}kowski
correspondence that is valid for the continuous variable systems. In
Section \ref{sec:hilb-schm-dist}, we construct the Hilbert-Schmidt
line and volume elements on the manifold of Gaussian states with a
fixed marginal. This section contains our first main result: the
volume element of the one-mode Gaussian channels. We use this result
in Section \ref{sec:entangl-incomp-break}, where we compute the
relative volumes for the entanglement breaking and incompatibility
breaking channels. In the last section, we present conclusions,
open questions, and a final outlook.

\section{Gaussian states}\label{sec:gaussian-states}
Consider an $n$-particle continuous variable system with the
corresponding Hilbert space
$\mathcal{H}:=\bigotimes\limits_{k=1}^nL^2(\mathbb{R})$. The canonical
operators acting on $\mathcal{H}$ can be arranged to create a vector
$R:=(q_1,p_1,\ldots,q_n,p_n)^T$ with $q_k:=a_k^\dagger+a_k$ and $p_k:=
i (a_k^\dagger-a_k)$. The creation and annihilation operators satisfy
the bosonic commutation relations $[a_i,a_j^\dagger]=\delta_{ij}\mathbb{I}$,
$[a_i,a_j]=0$, which induce the following relation for the vector
components,
\begin{align*} 
  [R_i,R_j^\dagger]=2 i
  \Omega_{ij},\quad\Omega:=\bigoplus_{k=1}^n\omega,\quad
  \omega:=\begin{pmatrix} 0 & 1 \\ -1 & 0 \end{pmatrix}.
\end{align*} 
In the above equations, $\Omega$ is the symplectic form. Now, for every
quantum state $\rho$, let us introduce the characteristic function
$\chi(\xi):=\Tr[D(\xi)\rho]$, where
$\xi:=(\xi_1^{(1)},\xi_1^{(2)},\ldots,\xi_n^{(1)},\xi_n^{(2)})^T\in\mathbb{R}^{2n}$
are the phase space coordinates and 
\begin{align*} 
D(\xi)&:=e^{ i R^T\Omega\xi},
  \\ D(\xi)D(\xi^\prime)&=e^{-i\xi^T\Omega\xi'}D(\xi')D(\xi),
\end{align*}
are the displacement (or Weyl) operators~\cite{HILLERY1984121,doi:10.1063/1.4928044}.
By definition, a Gaussian state is a quantum
state whose characteristic function $\chi(\xi)$ is a Gaussian
function \cite{Weedbrook2012,doi:10.1063/1.4928044}. 
We write it as 
\begin{equation*} 
\chi(\xi)=\exp\left[-\frac 12
\xi^T\Omega\Sigma\Omega^T\xi+ i \ell^T\Omega\xi\right],
\end{equation*} 
where $\ell_k:=\Tr(\rho R_k)$ is the displacement vector
and $\Sigma_{ij}:=\frac{1}{2}( \Tr[\rho(R_iR_j+R_jR_i)]-\ell_i\ell_j)$ is
the covariance matrix of the underlying Gaussian quantum state $\rho=\rho(\Sigma,\ell)$. 
Then, the state can be expressed as 
\begin{equation*}
\rho(\Sigma,\ell):=\int_{\mathbb{R}^{2n}}\frac{\der^{2n}\xi}{\pi^n}\chi(\xi)D(-\xi).
\end{equation*}
Note that  $\Sigma$ is a covariance matrix of a
Gaussian state if and only if
\begin{equation}\label{CP} 
\Sigma+i\Omega\geq 0,
\end{equation}
due to the canonical commutation relations. We would like to stress
that, despite the apparent similarities between the Gaussian states in
the classical and quantum domain, the quantum case is fundamentally
different due to eq.~(\ref{CP}). The classical Gaussian states
(probability densities) can become arbitrarily narrow approaching the
Dirac $\delta$ function
in a limiting sense, whereas eq.~(\ref{CP}) sets the minimal admissible width
for the Gaussian quantum states compatible with Heisenberg's uncertainty relation.

\section{Gaussian channels and the Choi-Jamio{\l}kowski isomorphism}\label{sec:gauss-chann-choi}

The Gaussian quantum channels $\Lambda:\mathcal{H}_A\to\mathcal{H}_B$ 
are completely positive, trace-preserving maps that transform Gaussian quantum states into Gaussian quantum states. 
The action of a Gaussian channel leads to a dual map 
on the displacement operators \cite{Holevo,Holevo2007,holevo2011probabilistic}
\begin{equation*}\label{channel}
\Lambda^\ast[D(\xi)]=D(M\xi)\exp\left[-\frac 12 \xi^TN\xi+i c^T\xi\right],
\end{equation*}
with the matrices $M\in\mathbb{M}_{2n_A\times 2n_B}(\mathbb{R})$,
$N=N^T\in\mathbb{M}_{2n_B\times 2n_B}(\mathbb{R})$, and the
$2n_A$-dimensional vector $c$. Therefore, each channel is completely
characterized by a triple $(M,N,c)$. The action of a Gaussian channel
on a Gaussian state $\rho(\Sigma,\ell)$ is then efficiently expressed in
terms of the covariance matrix and the displacement vector,
\begin{equation*}
\Sigma\mapsto M^T\Sigma M+N,\qquad \ell\mapsto M^T\ell+c.
\end{equation*}
The complete positivity condition
\begin{equation*}
N-iM^T\Omega M+i\Omega\geq 0
\end{equation*}
follows directly from eq. (\ref{CP}).

In order to leverage the known results on the geometry of Gaussian
states~\cite{Link2015,Sohr_2018}, we use the Choi-Jamio{\l}kowski
isomorphism to express the Gaussian channels in terms of the state
parameters. Let us recall Lemma 4 from Kiukas
et. al.~\cite{Kiukas2017}.

\begin{Lemma}\label{L1}
There exists a one-to-one correspondence between bipartite Gaussian
states $\rho_{AB}$ with a common marginal $\sigma=\Tr_A\rho_{AB}$, a
covariance matrix $\Sigma_\sigma$ of a full symplectic rank, and a
displacement $\ell_\sigma$, and Gaussian channels
$\Lambda:\mathcal{H}_B\to\mathcal{H}_A$, such that
\begin{equation*}
\rho_{AB}=(\Lambda\otimes\oper_B)(\rho_\Omega).
\end{equation*}
The Gaussian state $\rho_\Omega$ is characterized by the following covariance matrix and displacement,
\begin{equation*}
\Sigma_\Omega:=\begin{pmatrix}
\Sigma_\sigma & S_\sigma^TZ_\sigma S_\sigma \\ S_\sigma^TZ_\sigma S_\sigma & \Sigma_\sigma
\end{pmatrix},\qquad \ell_\Omega:=\ell_\sigma\oplus\ell_\sigma.
\end{equation*}
In the above definition,
\begin{align*}
&\Sigma_\sigma=:S_\sigma^TD_\sigma S_\sigma,\\ &D_\sigma:=\mathrm{diag}(\nu_{\sigma,1},\nu_{\sigma,1},\ldots,\nu_{\sigma,N},\nu_{\sigma,N})\\
&Z_\sigma:=\bigoplus_{k=1}^N\sigma_3\sqrt{\nu_{\sigma,k}^2-1},
\end{align*}
with $S_\sigma$ being the symplectic matrix ($S_\sigma^T\Omega S_\sigma=\Omega$) diagonalizing $\Sigma_\sigma$.
\end{Lemma}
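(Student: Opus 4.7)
The plan is to establish the bijection by explicitly constructing the map in both directions and then verifying that physicality on one side matches physicality on the other. Because Gaussian channels are fully parameterized by $(M,N,c)$ and Gaussian states by their first two moments, the argument reduces to a linear-algebraic identification, with the reference state $\rho_\Omega$ playing the role of a CV surrogate for the maximally entangled state in the standard Choi-Jamio\l{}kowski construction.

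In the forward direction, I would compute the characteristic function of $(\Lambda\otimes\oper_B)(\rho_\Omega)$ using the dual action $\Lambda^\ast[D(\xi)]=D(M\xi)\exp[-\tfrac12\xi^TN\xi+ic^T\xi]$ together with the Gaussian form of $\chi_{\rho_\Omega}$. A block-wise reading of the resulting Gaussian gives
\begin{equation*}
\Sigma_{AB}=\begin{pmatrix} M^T\Sigma_\sigma M+N & M^TS_\sigma^TZ_\sigma S_\sigma \\ S_\sigma^TZ_\sigma S_\sigma M & \Sigma_\sigma\end{pmatrix},\ \ \ell_{AB}=\begin{pmatrix}M^T\ell_\sigma+c\\ \ell_\sigma\end{pmatrix},
\end{equation*}
so the marginal on $B$ is $(\Sigma_\sigma,\ell_\sigma)$ by inspection, confirming the common-marginal requirement.

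For the inverse direction, I would read off the channel parameters from the blocks of a given $\rho_{AB}$. The off-diagonal block yields $M^T=\Sigma_{AB}^{(\mathrm{off})}(S_\sigma^TZ_\sigma S_\sigma)^{-1}$, after which $N=\Sigma_{AA}-M^T\Sigma_\sigma M$ and $c=\ell_A-M^T\ell_\sigma$ are fixed. The full-symplectic-rank hypothesis enters in an essential way precisely here: $Z_\sigma$ is invertible iff every symplectic eigenvalue $\nu_{\sigma,k}$ is strictly greater than one, and without this the off-diagonal block would fail to determine $M$ uniquely. Mutual inversion of the two constructions is then clear by substitution.

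The main obstacle is matching the two physicality conditions, namely the complete-positivity bound $N-iM^T\Omega M+i\Omega\geq 0$ on one side and the bona fide inequality $\Sigma_{AB}+i\Omega_{AB}\geq 0$ on the other (with $\sigma$ itself already a valid state). I would tackle this by taking the Schur complement of $\Sigma_{AB}+i\Omega_{AB}$ with respect to the $B$ block, which is invertible thanks to the full-symplectic-rank assumption. Using the identity $Z_\sigma^2=D_\sigma^2-\mathbb{I}$ together with the symplectic relations $S_\sigma^T\Omega S_\sigma=\Omega$ and $\Sigma_\sigma=S_\sigma^TD_\sigma S_\sigma$, I expect the Schur complement to collapse, after routine simplification, to $N-iM^T\Omega M+i\Omega$, thereby closing the equivalence and completing the proof.
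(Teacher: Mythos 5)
Your proposal cannot be checked against an internal argument, because the paper does not prove Lemma~\ref{L1} at all: it is imported verbatim as Lemma~4 of Kiukas et al.~\cite{Kiukas2017}. Taken on its own terms, your self-contained route is correct: the characteristic-function computation in the forward direction reproduces exactly the block form of eq.~(\ref{Sigma2}), the inversion of the blocks recovers $(M,N,c)$ uniquely, and the one step you left as an expectation does close. Indeed, writing $\Sigma_\sigma+i\Omega=S_\sigma^T(D_\sigma+i\Omega)S_\sigma$, which is strictly positive (hence invertible) precisely because of the full-symplectic-rank hypothesis, the Schur complement of $\Sigma+i\Omega_{AB}$ with respect to the $B$ block reads
\begin{equation*}
N+M^T\Sigma_\sigma M+i\Omega-M^TS_\sigma^TZ_\sigma(D_\sigma+i\Omega)^{-1}Z_\sigma S_\sigma M,
\end{equation*}
and the blockwise identity $\sigma_3\left(\nu\mathbb{I}+i\omega\right)^{-1}\sigma_3=(\nu^2-1)^{-1}\left(\nu\mathbb{I}+i\omega\right)$ gives $Z_\sigma(D_\sigma+i\Omega)^{-1}Z_\sigma=D_\sigma+i\Omega$, so the middle term collapses to $M^T(\Sigma_\sigma+i\Omega)M$ and the Schur complement becomes $N-iM^T\Omega M+i\Omega$, exactly the complete-positivity matrix, as you anticipated; note that the same hypothesis ($\nu_{\sigma,k}>1$) is what you already invoked for the invertibility of $Z_\sigma$, so no extra assumption is needed. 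Two small points would make the argument fully tight: (i) state explicitly that $\Lambda\otimes\oper_B$ acts trivially on the $B$ mode, so every image state has $\Tr_A\rho_{AB}=\sigma$, and conversely any Gaussian $\rho_{AB}$ with that marginal has the assumed block structure, so the two constructions are mutually inverse by direct substitution; (ii) record that uniqueness of $\Lambda$ follows because a Gaussian channel is determined by its triple $(M,N,c)$ through its action on the Weyl operators, and these parameters are uniquely read off the blocks once $S_\sigma^TZ_\sigma S_\sigma$ is invertible.
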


The correspondence between the Gaussian channel $\Lambda(M,N,c)$ and
the Gaussian Choi-Jamio{\l}kowski (CJ) state $\rho_{AB}(\Sigma,\ell)$
is given as follows~\cite{Kiukas2017},
\begin{equation}\label{corr}
\begin{cases}
\Sigma&=\begin{pmatrix}
\Sigma_A & \Gamma^T \\ \Gamma & \Sigma_\sigma
\end{pmatrix},\\
\ell&=\ell_A\oplus\ell_\sigma,
\end{cases}
\qquad
\begin{cases}
M&=(S_\sigma^TZ_\sigma S_\sigma)^{-1}\Gamma,\\
N&=\Sigma_A-M^T\Sigma_\sigma M,\\
c&=\ell_A-M^T\ell_\sigma.
\end{cases}
\end{equation}

Obviously, the isomorphism in Lemma~\ref{L1} depends on the 
choice of $\sigma$, and so will the metric properties of the channel space. 

\section{Geometry of Gaussian states and channels}\label{sec:hilb-schm-dist}
In order to discuss the geometric properties in the space of quantum channels, we need to 
define a metric in terms of a line element.
For finite-dimensional systems, there exists the unique
unitarily invariant line element induced by the Fubini-Study
metric \cite{Bengtsson2006}. In the infinite-dimensional case,
however, there are many possible non-equivalent choices for the
metric~\cite{Sohr_2018}. We base our calculations on the
Hilbert-Schmidt distance defined by $\der s^2=\Tr(\der \rho^2)$. On
the manifold of the Gaussian states, it takes
the following form,
\begin{equation*}\label{dhs}
\begin{split}
\der s^2=&\Tr[\rho(\Sigma+\der\Sigma,\ell+\der\ell)]^2
+\Tr[\rho(\Sigma,\ell)]^2\\&
-2\Tr[\rho(\Sigma,\ell)\rho(\Sigma+\der\Sigma,\ell+\der\ell)].
\end{split}
\end{equation*}
For more details considering the computation of the line element, see Appendix~\ref{sec:hilbert-schmidt-line}. The final result is
\begin{equation}\label{line}
\begin{split}
\der s^2=
\frac{1}{16\sqrt{\det\Sigma}}\Big\{2\Tr[\Sigma^{-1}\der\Sigma]^2+[\Tr(\Sigma^{-1}\der\Sigma)]^2&\\
+8\der\ell^T\Sigma^{-1}\der\ell\Big\}.&
\end{split}
\end{equation}
Hence, the line element $\der s^2$ and the volume element $\der V$ can be written as
\begin{equation*}\label{line2}
\der s^2=\begin{pmatrix}\der\mathbf{\Sigma}^T & \der\ell^T\end{pmatrix}
\begin{pmatrix}G & 0 \\ 0 & g\end{pmatrix}
\begin{pmatrix}\der\mathbf{\Sigma} \\ \der\ell\end{pmatrix},
\end{equation*}
\begin{equation}\label{volume}
\der V=\sqrt{\det G}\sqrt{\det g}\prod_{i=1}^{4n^2}\der\mathbf{\Sigma}_i
\prod_{j=1}^{2n}\der\ell_j,
\end{equation}
where $\der\mathbf{\Sigma}=\mathrm{vec}(\der\Sigma)$ is the matrix vectorization. 
For $\ell=0$, eq. (\ref{line}) is in correspondence with the results obtained in \cite{Sohr_2018}. Note that
$\der\Sigma$ and $\der\ell$ are not coupled, and therefore a non-zero
displacement produces the multiplicative factor
\begin{equation*}
\sqrt{\det g}
=\left[\det\left(\frac{\Sigma^{-1}}{2\sqrt{\det\Sigma}}\right)\right]^{-1}
=2^{-n}(\det\Sigma)^{-\frac{n+1}{2}}
\end{equation*}
in the volume element.

Let us consider the one-mode Gaussian channel
$\Lambda(M,N,c)$. According to eqs. (\ref{corr}), the corresponding
two-mode Gaussian CJ state is given by
\begin{equation}\label{Sigma2}
\begin{split}
\Sigma=&
\begin{pmatrix}
N+M^T\Sigma_\sigma M & M^TS_\sigma^TZ_\sigma S_\sigma \\
S_\sigma^TZ_\sigma S_\sigma M & \Sigma_\sigma
\end{pmatrix},\\
\ell=&(c+M^T\ell_\sigma)\oplus\ell_\sigma.
\end{split}
\end{equation}
A non-zero displacement vector $\ell$ corresponds to local unitary contributions 
of the channel. When considering the effect of the channel on the 
non-local correlations, we can -- without the loss of generality -- set $\ell=0$, as we do 
for the rest of the article.

Note that any two-mode Gaussian covariance matrix 
can be expressed in the standard form $\Sigma=SWS^T$,  
where $S$ is a local symplectic transformation~\cite{Sohr_2018} and
\begin{align*}
  W = \begin{pmatrix}
  \nu_A  & 0 &\gamma_+ & 0 \\
  0 &\nu_A & 0 & \gamma_-\\
  \gamma_+ & 0 & \nu_\sigma & 0 \\
  0 & \gamma_- & 0 & \nu_\sigma
\end{pmatrix}.
\end{align*}
Here, $\nu_A$ and $\nu_\sigma$ are the symplectic eigenvalues of the 
marginal states, and $\gamma_\pm$ describe the correlations between the 
two modes.
Following the method presented in~\cite{Sohr_2018}, we compute the
Hilbert-Schmidt volume element for the Gaussian states with the
covariance matrix given by eq. (\ref{Sigma2}) (for more details, see
Appendix~\ref{sec:hilb-schm-volume}). This way, we obtain
\begin{equation*}
\der V=\sqrt{\det G}\der\nu_A\der\gamma_+\der\gamma_-\der\theta\der m(S_A),
\end{equation*}
where $\der m(S_A)$ is the measure of the non-compact symplectic group $Sp(2)$ and
\begin{equation*}\label{detG}
\sqrt{\det G}=\frac{\nu_A^2\nu_\sigma^3(\gamma_+^2-\gamma_-^2)}
{32\sqrt{2}(\gamma_+^2-\nu_A\nu_\sigma)^{17/4}
(\gamma_-^2-\nu_A\nu_\sigma)^{17/4}}.
\end{equation*}
It turns out that quantum correlations in the two-mode Gaussian 
states are most conveniently analyzed in the purity-seralian coordinates~\cite{Adesso2004}.
For the readers' convenience, we recall the definitions,
\begin{equation}\label{pur}
\begin{split}
\mu_{A/\sigma}&:=\frac{1}{\sqrt{\det\Sigma_{A/\sigma}}}=\frac{1}{\nu_{A/\sigma}},\\
\mu&:=\frac{1}{\sqrt{\det\Sigma}}=\frac{1}{\sqrt{(\gamma_+^2-\nu_A\nu_\sigma)(\gamma_-^2-\nu_A\nu_\sigma)}},\\
\Delta&:=\det\Sigma_A+\det\Sigma_\sigma+2\det\Gamma=\nu_A^2+\nu_\sigma^2+2\gamma_+\gamma_-.
\end{split}
\end{equation}
As it is apparent from their definitions, these four new coordinates are local symplectic invariants.
The inverse relations read
\begin{equation*}\label{cpm}
\nu_{A/\sigma}=\frac{1}{\mu_{A/\sigma}},\qquad
\gamma_\pm=\frac{\sqrt{\mu_A\mu_\sigma}}{4}(\epsilon_+\pm\epsilon_-),
\end{equation*}
where
\begin{equation*}
\epsilon_\pm:=\sqrt{\left(\Delta-\frac{(\mu_A\pm\mu_\sigma)^2}{\mu_A^2\mu_\sigma^2}\right)^2
-\frac{4}{\mu^2}}.
\end{equation*}
Finally, we obtain the formula
\begin{equation*}
\der V=\frac{\mu^{11/2}}{64\sqrt{2}\mu_A^3\mu_\sigma^2}
\der\mu_A\der\mu\der\Delta\der\theta\der m(S_A).
\end{equation*}
The range of coordinates is determined by $0\leq\mu_{A/\sigma}\leq 1$, $\epsilon_\pm^2\geq 0$, 
and by the complete positivity condition (\ref{CP}), which is equivalent to
\begin{equation*}
1+\frac{1}{\mu^2}-\Delta\geq 0.
\end{equation*}
Combining all these requirements results in the following conditions
for the two-mode Gaussian CJ states that correspond to legitimate
one-mode Gaussian channels \cite{Adesso2004},
\begin{equation}\label{range}
\begin{split}
&0\leq\mu_{A/\sigma}\leq 1,\qquad \mu_A\mu_\sigma\leq\mu\leq\frac{\mu_A\mu_\sigma}{\mu_A\mu_\sigma+|\mu_A-\mu_\sigma|},\\
&\frac{2}{\mu}+\frac{(\mu_A-\mu_\sigma)^2}{\mu_A^2\mu_\sigma^2}\leq\Delta\leq
\min\left\{-\frac{2}{\mu}+\frac{(\mu_A+\mu_\sigma)^2}{\mu_A^2\mu_\sigma^2},1+\frac{1}{\mu^2}\right\}.
\end{split}
\end{equation}
Now, we want to express conditions~(\ref{range}) for the complete positivity of the channel in terms of the channel parameters directly. 
Interestingly, it turns out that these conditions depend only on the determinants of $M$ and $N$.
\begin{Proposition}\label{prop:CPMN}
Any one-mode Gaussian map $\Lambda$ characterized by $(M,N,c)$ is completely positive if and only if
\begin{equation}\label{eq:CP_cond}
\det N\geq (\det M-1)^2.
\end{equation}
\end{Proposition}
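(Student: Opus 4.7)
The plan is to analyze the complete-positivity inequality $N-iM^T\Omega M+i\Omega\geq 0$ stated just after eq.~(\ref{CP}) directly, and to show that for $n=1$ it collapses to a single scalar inequality involving only $\det N$ and $\det M$.

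First I would exploit the elementary one-mode identity $M^T\Omega M=(\det M)\,\Omega$, valid for every real $2\times 2$ matrix $M$, which follows from a one-line entry-wise computation using the explicit form of $\omega$ given at the start of Section~\ref{sec:gaussian-states}. Substituting this into the CP condition reduces it to a single $2\times 2$ Hermitian inequality,
\begin{equation*}
H := N + i(1-\det M)\,\Omega \;\geq\; 0.
\end{equation*}

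Next, I would convert this matrix positivity into scalar conditions using the standard fact that a Hermitian $2\times 2$ matrix is positive semi-definite if and only if its trace and its determinant are both non-negative. Because $\Omega$ is traceless, $\Tr H=\Tr N$. Writing $H$ out explicitly with $N=N^T$, the term $i(1-\det M)\,\Omega$ contributes a pure imaginary shift only to the off-diagonal entries, and a direct computation yields
\begin{equation*}
\det H \;=\; \det N - (1-\det M)^2 \;=\; \det N - (\det M-1)^2.
\end{equation*}
Thus $H\geq 0$ is equivalent to $\Tr N\geq 0$ together with $\det N\geq(\det M-1)^2$.

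Finally, I would argue that the trace condition is redundant in the present statement: the inequality $\det N\geq(\det M-1)^2$ already forces $\det N\geq 0$, and in any physically meaningful Gaussian channel the noise matrix satisfies $N\geq 0$, so $\Tr N\geq 0$ follows automatically and can be absorbed into the single stated condition. The whole argument is essentially mechanical once the one-mode identity $M^T\Omega M=(\det M)\,\Omega$ is in hand; the only genuine subtlety I anticipate is explaining cleanly why the trace condition need not be listed alongside the determinant condition, rather than any serious computation.
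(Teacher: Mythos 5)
Your argument is correct, and it takes a genuinely different route from the paper. You work directly at the channel level: starting from the condition $N-iM^T\Omega M+i\Omega\geq 0$, you use the one-mode identity $M^T\omega M=(\det M)\,\omega$ to reduce everything to positivity of the single $2\times 2$ Hermitian matrix $H=N+i(1-\det M)\,\omega$, and then apply the elementary trace/determinant criterion. The paper instead works at the level of the $4\times 4$ Choi-Jamio{\l}kowski covariance matrix: it invokes the result of Simon that, for one-mode channels, complete positivity is equivalent to $\det(\Sigma+i\Omega)\geq 0$, factorizes this determinant via a Schur complement as $\det D\,\det F$ with $D=\Sigma_\sigma+i\omega$ (so $\det D=\nu_\sigma^2-1\geq 0$), and computes $F=N+i\omega(1-\det M)$ using the correspondence \eqref{corr} --- so both derivations ultimately land on the same $2\times 2$ matrix, but you reach it without the CJ state, the block structure, or the external determinant criterion, which makes your version more elementary and self-contained. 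The one subtlety you flag is real: the determinant inequality alone is not equivalent to $H\geq 0$ without the trace (equivalently $N\geq 0$) condition, as $N=-\mathbb{I}$ with $\det M=0$ shows; the paper faces the same issue but buries it in the citation of Simon's equivalence, whereas you make it explicit and discharge it by the standing assumption $N\geq 0$ for Gaussian channels (note also that $N\geq 0$ is itself necessary for complete positivity, since $v^TNv=v^THv\geq 0$ for real $v$, so the extra hypothesis is only needed in the ``if'' direction). In short: same endpoint, cleaner and more direct path, with a more honest treatment of the implicit positivity assumption.
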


\begin{proof}
For one-mode Gaussian channels, the complete positivity condition in eq. (\ref{CP}) is equivalent to \cite{Simon2000}
\begin{equation*}
\det(\Sigma+i\Omega)\geq 0.
\end{equation*}
Note that $\Sigma+i\Omega$ is a $4\times 4$ matrix, so its determinant can be calculated using the property
\begin{equation}\label{CP2}
\det(\Sigma+i\Omega)=\det D\det F.
\end{equation}
In the above formula, $D:=\Sigma_\sigma+i\omega$ and $F:=\Sigma_A+i\omega-\Gamma D^{-1}\Gamma^T$ for $\Sigma$ being a block matrix from eq. (\ref{corr}). Simple calculations on $2\times 2$ matrices show that $\det D=\nu_\sigma^2-1$ and
\begin{equation*}
F=N+i\omega(1-\det M),
\end{equation*}
where we implemented the formulas for $\Sigma_A$ and $\Gamma$ given on the r.h.s. of eq. (\ref{corr}). As $\det D\geq 0$, condition (\ref{CP2}) can be rewritten into
\begin{equation*}
\det F=\det N-(1-\det M)^2\geq 0.
\end{equation*}
\end{proof}

\section{Entanglement and incompatibility breaking channels}\label{sec:entangl-incomp-break}

Let us consider a special class of quantum channels, for which $\rho_{AB}=(\Lambda\otimes\oper_B)(\rho_\ast)$ is
separable for any (even entangled) state $\rho_\ast$. These are known
as the {\it entanglement breaking channels} and can always be written
in the Holevo form \cite{doi:10.1142/S0129055X03001709}
\begin{equation}\label{EBC}
\Lambda[\rho]=\sum_k\omega_k\Tr(F_k\rho),
\end{equation}
where $\omega_k$ are quantum states, and $F_k$ form a POVM (positive operator-valued measure). For finite-dimensional quantum systems, it is straightforward to show that $\Lambda$ is entanglement breaking if and only if $\rho_{AB}$ is separable for $\rho_\ast$ being a maximally entangled state. This
notion can be extended to infinite-dimensional systems if one replaces
the maximally entangled state with $\rho_\Omega$ from Lemma \ref{L1}.

\begin{Lemma}\label{L2}
A Gaussian channel $\Lambda$ is entanglement breaking if and only if
\begin{equation*}
\rho_{AB}=(\Lambda\otimes\oper_B)(\rho_\Omega),
\end{equation*}
with a marginal $\sigma=\Tr_A\rho_{AB}$, is separable.
\end{Lemma}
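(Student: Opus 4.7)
The plan is to handle the two directions separately; only the converse requires substantial work.

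For the direct implication (EB $\Rightarrow$ $\rho_{AB}$ separable), I would start from the Holevo representation in eq.~(\ref{EBC}) and substitute it into the Choi state, giving
\begin{equation*}
(\Lambda\otimes\oper_B)(\rho_\Omega)=\sum_k\omega_k\otimes \Tr_A\bigl[(F_k\otimes\oper_B)\rho_\Omega\bigr].
\end{equation*}
Each term $\tau_k:=\Tr_A[(F_k\otimes\oper_B)\rho_\Omega]$ is positive, and after trace normalization we obtain a convex mixture of product states, which is manifestly separable.

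For the converse ($\rho_{AB}$ separable $\Rightarrow$ $\Lambda$ EB), the idea is to invert the Choi-Jamio{\l}kowski correspondence of Lemma~\ref{L1} in a way that preserves the separable structure. Decompose the separable state as $\rho_{AB}=\int d\mu(\lambda)\,\sigma_A^\lambda\otimes\sigma_B^\lambda$. Because the CJ correspondence is linear and one-to-one (this is the key role of $\rho_\Omega$ having full symplectic rank), the channel $\Lambda$ is recovered from $\rho_{AB}$ via a relation of the form
\begin{equation*}
\Lambda[\rho]=\Tr_B\bigl[(\oper_A\otimes X_\rho)\,\rho_{AB}\bigr],
\end{equation*}
where $\rho\mapsto X_\rho$ is a linear map determined by $\rho_\Omega$ alone. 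Feeding the separable decomposition into this formula and transferring the $\rho$-dependence to the dual $X^*$ yields
\begin{equation*}
\Lambda[\rho]=\int d\mu(\lambda)\,\sigma_A^\lambda\,\Tr\bigl[\rho\,X^*(\sigma_B^\lambda)\bigr],
\end{equation*}
which is precisely the Holevo form of eq.~(\ref{EBC}) with effects $F_\lambda\propto X^*(\sigma_B^\lambda)$.

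The hard part is to construct the inversion $X$ explicitly and verify that the resulting $\{F_\lambda\}$ is a genuine POVM. Positivity of each $F_\lambda$ would follow from positivity of $X^*$, and the completeness relation $\int d\mu(\lambda)\,F_\lambda=\oper_B$ would follow from the marginal identity $\Tr_A\rho_{AB}=\sigma$ together with the normalization implicit in the CJ identification of Lemma~\ref{L1}. Concretely, the construction will use the symplectic data $\Sigma_\sigma$, $S_\sigma$ and $Z_\sigma$ of the fixed marginal $\sigma$, which is precisely the information that makes the CV Choi-Jamio{\l}kowski isomorphism divergence-free and is the reason one replaces the unphysical maximally entangled state by $\rho_\Omega$ in the first place. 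Once these technical checks are in place, combining the two implications completes the proof.
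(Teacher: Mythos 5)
Your forward direction is fine and is exactly the paper's (the paper dismisses it as trivial; your explicit decomposition $(\Lambda\otimes\oper_B)(\rho_\Omega)=\sum_k\omega_k\otimes\Tr[(F_k\otimes\oper_B)\rho_\Omega]$ is the same observation spelled out). The converse, however, is only a plan: you posit an inversion map $X$ of the Choi--Jamio{\l}kowski correspondence and then defer precisely the step that constitutes the proof, namely constructing $X$ and showing that the resulting effects form a bona fide POVM. The paper makes the inversion concrete through the identity from Kiukas et al., $\sqrt{\sigma}\,\Lambda^\ast[A]\,\sqrt{\sigma}=\Tr_A[\rho_{AB}(A\otimes\mathbb{I}_B)]^T$, which for a separable $\rho_{AB}=\sum_kp_k\,\omega_k\otimes\beta_k$ gives $\sqrt{\sigma}\,\Lambda^\ast[A]\,\sqrt{\sigma}=\sum_kp_k\beta_k\Tr(\omega_kA)$. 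The crux, which your sketch glosses over, is that one \emph{cannot} simply undo the sandwich: $\sigma$ is a full-rank Gaussian state, so $\sigma^{-1/2}$ is unbounded, and your hypothetical $X^\ast(\sigma_B^\lambda)$ is a priori only a densely defined quadratic form. The real work is to show that $\sigma^{-1/2}p_k\beta_k\,\sigma^{-1/2}$ extends to a \emph{bounded} positive operator $F_k$ with $\sqrt{\sigma}F_k\sqrt{\sigma}=p_k\beta_k$; the paper does this via the bound $\|\sqrt{F_k}\psi\|^2\leq\sum_k\<\psi|F_k\psi\>=\|\psi\|^2$ together with $\|F_k\|=\|\sqrt{F_k}\|^2$. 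Your appeal to ``positivity of $X^\ast$'' does not address this: positivity is not the obstruction, boundedness of the extension is.

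Similarly, your completeness check is a placeholder rather than an argument. In the paper it follows from $\sum_k\sqrt{\sigma}F_k\sqrt{\sigma}=\sum_kp_k\beta_k=\Tr_A\rho_{AB}=\sigma$, i.e. the marginal condition of Lemma~\ref{L1}, which identifies $\sum_kF_k=\mathbb{I}$ after the sandwiching by $\sqrt{\sigma}$ has been controlled. So while your overall strategy (invert the CJ map to read off a Holevo form) is the same as the paper's, the proposal as written has a genuine gap: it omits the explicit dual-map identity that realizes the inversion and, more importantly, the functional-analytic step handling the unbounded $\sigma^{-1/2}$, which is the only nontrivial content of the lemma in the continuous-variable setting.
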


\begin{proof}
Note that if $\Lambda$ is entanglement breaking, then trivially
$\rho_{AB}$ is separable. Now, assume that $\rho_{AB}$ is separable;
i.e., $\rho_{AB}=\sum_kp_k\omega_k\otimes\beta_k$ with density
operators $\omega_k$, $\beta_k$ and a probability distribution
$p_k$. Then, we show that
\begin{equation}\label{TRA}
\Tr_A[\rho_{AB}(A\otimes\mathbb{I}_B)]=\sum_kp_k\beta_k\Tr(\omega_kA).
\end{equation}
Recall that for an arbitrary $\rho_{AB}$ with $\sigma=\Tr_A\rho_{AB}$, one has \cite{Kiukas2017}
\begin{equation*}\label{TRA2}
\sqrt{\sigma}\Lambda^\ast[A]\sqrt{\sigma}=\Tr_A[\rho_{AB}(A\otimes\mathbb{I}_B)]^T,
\end{equation*}
where $\Lambda^\ast$ is a map dual to the Gaussian channel $\Lambda$. Therefore, eq. (\ref{TRA}) is equivalent to
\begin{equation*}
\sqrt{\sigma}\Lambda^\ast[A]\sqrt{\sigma}=\sum_kp_k\beta_k\Tr(\omega_kA).
\end{equation*}
One cannot simply invert $\sqrt{\sigma}$, as the inverse of a full-rank state is unbounded.
However, 
$\sigma^{-1/2}p_k\beta_k\sigma^{-1/2}$ extends to a bounded operator $F_k$ for which $\sqrt{\sigma}F_k\sqrt{\sigma}=p_k\beta_k$,
because $\|\sqrt{F_k}\psi\|^2\leq\sum_k\<\psi|F_k\psi\>=\|\psi\|^2$ and $\|F_k\|=\|\sqrt{F_k}\|^2$. 
Now, we can see that $\Lambda^\ast$ is dual to the entanglement breaking channel of the form (\ref{EBC}). 
Indeed, the $F_k$ define a POVM, as
\begin{equation*}
\sum_k\sqrt{\sigma}F_k\sqrt{\sigma}=\sum_kp_k\beta_k=\Tr_A\rho_{AB}=\sigma.
\end{equation*}
\end{proof}

It was shown \cite{Simon2000} that for two-mode Gaussian states, the
Peres-Horodecki criterion \cite{PhysRevLett.77.1413,HORODECKI1997333}
is necessary and sufficient for separability. Namely, $\rho_{AB}$ is
separable if and only if
\begin{equation}\label{PPT}
\det(\Sigma_{PPT}+i\Omega)\geq 0,
\end{equation}
where $\Sigma_{PPT}=\Theta\Sigma\Theta$ is the covariance matrix of
the partially transposed state, and
$\Theta=\mathrm{diag}(-1,1,1,1)$. In the seralian-purity coordinates,
condition (\ref{PPT}) reads
\begin{equation}\label{range2}
1+\frac{1}{\mu^2}+\Delta-\frac{2}{\mu_A^2}-\frac{2}{\mu_\sigma^2}\geq 0.
\end{equation}

\begin{Proposition}\label{prop:EBC}
Any one-mode Gaussian channel $\Lambda$ characterized by $(M,N,c)$ is entanglement breaking if and only if
\begin{equation}\label{PPTMN}
\det N\geq (\det M+1)^2.
\end{equation}
\end{Proposition}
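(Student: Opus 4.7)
My plan is to replicate the proof of Proposition~\ref{prop:CPMN} almost verbatim, with the single change that the PPT criterion replaces the complete positivity condition. Three ingredients drive the argument.

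First, Lemma~\ref{L2} equates entanglement breaking of $\Lambda$ with separability of the CJ state $\rho_{AB}$, and Simon's criterion~(\ref{PPT}) rephrases separability of this two-mode Gaussian state as $\det(\Sigma_{PPT}+i\Omega)\geq 0$. The whole task is therefore to rewrite this determinant in terms of $M$ and $N$ directly.

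Second, I would apply the Schur complement trick of eq.~(\ref{CP2}) to $\Sigma_{PPT}+i\Omega$. Since $\Theta=\mathrm{diag}(-1,1,1,1)$ acts trivially on the second mode, the marginal block $D=\Sigma_\sigma+i\omega$ is unchanged, while writing $\theta_1:=\mathrm{diag}(-1,1)$ one finds
\[
F_{PPT}=\theta_1\Sigma_A\theta_1+i\omega-\theta_1\Gamma^T D^{-1}\Gamma\,\theta_1
=\theta_1\bigl[\Sigma_A-\Gamma^T D^{-1}\Gamma\bigr]\theta_1+i\omega.
\]

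Third, the key algebraic observation is $\theta_1\omega\theta_1=-\omega$. Plugging in the identity already extracted in Proposition~\ref{prop:CPMN}, namely $\Sigma_A-\Gamma^T D^{-1}\Gamma = N-i\omega\det M$, the conjugation by $\theta_1$ flips the sign of the $\omega$-term, producing $F_{PPT}=\theta_1 N\theta_1+i\omega(1+\det M)$. A direct $2\times 2$ determinant, using $\det(\theta_1 N\theta_1)=\det N$, yields $\det F_{PPT}=\det N-(\det M+1)^2$; since $\det D=\nu_\sigma^2-1\geq 0$ from the positivity of the marginal $\sigma$, the PPT inequality collapses to~(\ref{PPTMN}).

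The only real obstacle is bookkeeping the two $\omega$'s correctly. The central $+i\omega$ in $F_{PPT}$ comes directly from the $A$-block of $i\Omega$ and is \emph{not} conjugated by $\theta_1$, whereas the $-i\omega\det M$ hiding inside $\Sigma_A-\Gamma^T D^{-1}\Gamma$ \emph{is} conjugated and flips sign; the combination of the untouched $+1$ and the sign-flipped $+\det M$ is precisely what turns the $(\det M-1)^2$ of Proposition~\ref{prop:CPMN} into the $(\det M+1)^2$ asserted here. Everything else is a mechanical replay of the earlier calculation.
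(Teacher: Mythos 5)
Your proposal is correct and is exactly the argument the paper intends: the paper simply declares the proof "analogous to Proposition~\ref{prop:CPMN}," and you carry out that analogy faithfully — factorizing $\det(\Sigma_{PPT}+i\Omega)=\det D\,\det F_{PPT}$, reusing $\Sigma_A-\Gamma^TD^{-1}\Gamma=N-i\omega\det M$, and noting $\theta_1\omega\theta_1=-\omega$ so that the sign flip turns $(\det M-1)^2$ into $(\det M+1)^2$. The bookkeeping of which $\omega$ is conjugated by $\theta_1$ is handled correctly, so nothing is missing.
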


The proof is analogous to the proof of Proposition \ref{prop:CPMN}.

Now, consider another class of quantum channels, for which
$\rho_{AB}=(\Lambda\otimes\oper_B)(\rho_\ast)$ is non-steerable
for any choice of $\rho_\ast$. These channels are the so-called {\it
incompatibility breaking channels}~\cite{doi:10.1063/1.4928044}. It is known that a one-mode
Gaussian channel is incompatibility breaking if and only if
\begin{equation*}
\Sigma+i(0\oplus\omega)\geq 0,
\end{equation*}
or, in terms of the purities,
\begin{equation}\label{range3}
\mu\leq\mu_A,
\end{equation}
which are, in fact, just the conditions for the steerability of the CJ state.
\begin{Proposition}\label{prop:ICB}
Any one-mode Gaussian channel $\Lambda$ characterized by $(M,N,c)$ is incompatibility breaking if and only if
\begin{equation}\label{ICBMN}
\det N\geq\det M^2.
\end{equation}
\end{Proposition}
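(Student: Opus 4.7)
The plan is to mirror the proof of Proposition~\ref{prop:CPMN} almost line by line, exploiting the fact that the incompatibility-breaking condition $\Sigma + i(0 \oplus \omega) \geq 0$ differs from the complete-positivity condition $\Sigma + i\Omega \geq 0$ only by the absence of the $i\omega$ summand on the $A$-block. Writing $\Sigma$ in the block form of eq.~(\ref{corr}), I would obtain
\[
\Sigma + i(0\oplus\omega) = \begin{pmatrix} \Sigma_A & \Gamma^T \\ \Gamma & \Sigma_\sigma + i\omega \end{pmatrix},
\]
and then apply the same block-determinant identity used in Proposition~\ref{prop:CPMN} to get
\[
\det\bigl[\Sigma + i(0\oplus\omega)\bigr] = \det D \cdot \det F_{\mathrm{ICB}},
\]
with $D = \Sigma_\sigma + i\omega$ and $F_{\mathrm{ICB}} = \Sigma_A - \Gamma^T D^{-1}\Gamma$. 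As before, $\det D = \nu_\sigma^2 - 1 \geq 0$, so the ICB condition reduces to an inequality on $\det F_{\mathrm{ICB}}$.

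Next, I would exploit the trivial identity $F_{\mathrm{ICB}} = F - i\omega$, where $F = \Sigma_A + i\omega - \Gamma^T D^{-1}\Gamma$ is the matrix computed in the proof of Proposition~\ref{prop:CPMN}. Combining it with the intermediate result $F = N + i\omega(1 - \det M)$ established there yields immediately
\[
F_{\mathrm{ICB}} = N - i\omega\,\det M.
\]
A direct $2\times 2$ determinant calculation shows that $\det(N + i\omega\,\alpha) = \det N - \alpha^2$ for any symmetric $N$ and scalar $\alpha$, so that $\det F_{\mathrm{ICB}} = \det N - (\det M)^2$, and the ICB condition collapses to $\det N \geq (\det M)^2$, which is exactly eq.~(\ref{ICBMN}).

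The one subtle point — and the main (though mild) obstacle — is justifying that the matrix inequality $\Sigma + i(0\oplus\omega)\geq 0$ is genuinely equivalent to the scalar determinant condition, rather than merely implied by it. I would handle this exactly as in Proposition~\ref{prop:CPMN}, via the Schur complement: generically $D > 0$ (that is, $\nu_\sigma > 1$), in which case $\Sigma + i(0\oplus\omega)\geq 0 \Leftrightarrow F_{\mathrm{ICB}} \geq 0$, and the $2\times 2$ Hermitian matrix $F_{\mathrm{ICB}} = N - i\omega\,\det M$ is positive semidefinite precisely when both $\det F_{\mathrm{ICB}} \geq 0$ and $\operatorname{tr} F_{\mathrm{ICB}} = \operatorname{tr} N \geq 0$. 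The trace condition is automatic because the complete-positivity condition of Section~\ref{sec:gauss-chann-choi} forces the real symmetric part $N$ to be itself positive semidefinite. The degenerate boundary $\nu_\sigma = 1$ is then covered by a straightforward continuity argument within the full-symplectic-rank marginals required by Lemma~\ref{L1}.
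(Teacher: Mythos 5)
Your proposal is correct and follows essentially the route the paper intends: the paper leaves Proposition~\ref{prop:ICB} unproven, implicitly relying on the same block-determinant argument as Proposition~\ref{prop:CPMN} applied to $\Sigma+i(0\oplus\omega)\geq 0$, and your computation $F_{\mathrm{ICB}}=N-i\omega\det M$ with $\det F_{\mathrm{ICB}}=\det N-(\det M)^2$ is exactly that argument. Your Schur-complement and trace justification of the equivalence between the matrix inequality and the determinant condition is a sound (and slightly more careful) addition; note only that the case $\nu_\sigma=1$ needs no continuity argument, since the full-symplectic-rank assumption of Lemma~\ref{L1} excludes it outright.
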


In Fig.~\ref{fig:domains}, one can see the graphical representation of the 
conditions from Propositions 1--3 for the one-mode Gaussian
channels. The complete positivity domain from ineq.~(\ref{eq:CP_cond}) is
gray, the entanglement breaking domain from ineq.~(\ref{PPTMN}) is
double-hatched, and the incompatibility breaking domain from ineq.~(\ref{ICBMN}) is single-hatched. We see that the EB domain is
contained within the ICB domain, and both of these domains are
contained in the CP domain. Note that the inequalities presented in 
Propositions \ref{prop:CPMN}--\ref{prop:ICB} are known \cite{Adesso2004,Holevo2008,doi:10.1063/1.4928044}.
Here, however, we were able to bring them to a unified concise form involving 
the simple channel parameters $\det M$ and $\det N$.
\begin{figure}[ht!]
  \includegraphics[width=0.45\textwidth]{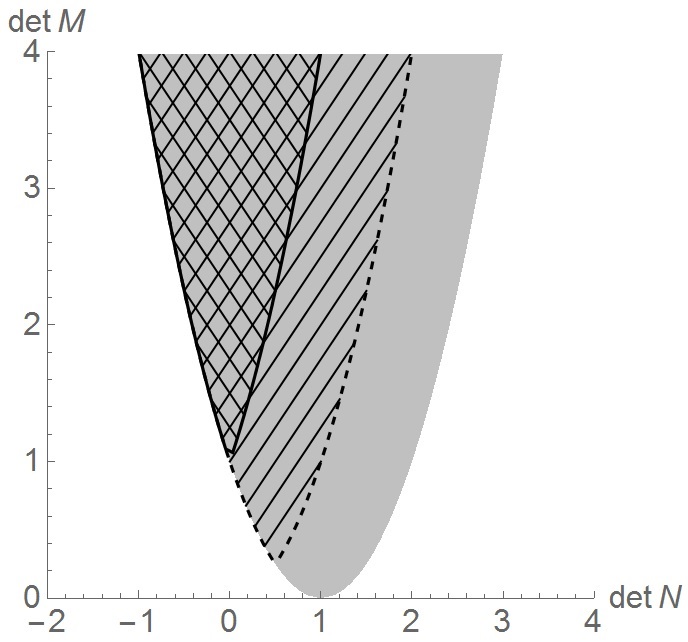}
\caption{\label{fig:domains}The range of $\det M$ and $\det N$ for which the complete positivity (gray), 
entanglement breaking (double-hatched), 
or incompatibility breaking (single-hatched) conditions are satisfied.}
\end{figure}

\section{Relative volumes}\label{sec:volume-ratios}
We analyze the geometry of the one-mode Gaussian channels by
considering the manifold of the corresponding Gaussian CJ states. In
order to do this, we make use of the local symplectic decomposition of
the covariance matrix $\Sigma$. Recall that the local symplectic group
$Sp(2)$ is non-compact~\cite{Adesso2007}, which means that the volume
of two-mode Gaussian states, and hence the one-mode Gaussian channels,
is not finite. The non-compactness emerges due to the possibility of
unbounded squeezing. Regardless, we can compute the relative volumes
of the quantities that are invariant with respect to the local
symplectic transformations, such as entanglement. This quantity can be
seen as the likelihood of encountering special classes of channels
among all one-mode Gaussian channels.

To calculate the total volume of all one-mode Gaussian channels, we
need to integrate the volume element in eq. (\ref{volume}) over the
range of parameters determined by ineq. (\ref{range}). Namely, one has
\begin{equation*}
V_{GC}=C\iiint_{\mathcal{CP}}\frac{\mu^{11/2}}{64\sqrt{2}\mu_A^3\mu_\sigma^2}
\der\mu_A\der\mu\der\Delta,
\end{equation*}
where $\mathcal{CP}$ is the region given by conditions (\ref{range}). We use the shorthand notation 
\begin{equation*}
C=\int_{\mathcal{M}}\der m(S_A) \int_0^{2\pi}\der\theta
\end{equation*}
for the divergent part of the integral.
Analogously, one obtains the volume of all entanglement breaking
channels $V_{EBC}$ and incompatibility breaking channels $V_{ICBC}$;
\begin{equation*}
V_{EBC}=C\iiint_{\mathcal{SEP}}\frac{\mu^{11/2}}{64\sqrt{2}\mu_A^3\mu_\sigma^2}
\der\mu_A\der\mu\der\Delta,
\end{equation*}
\begin{equation*}
V_{ICBC}=C\iiint_{\mathcal{NS}}\frac{\mu^{11/2}}{64\sqrt{2}\mu_A^3\mu_\sigma^2}
\der\mu_A\der\mu\der\Delta.
\end{equation*}
The regions of integration $\mathcal{SEP}$, $\mathcal{NS}$ are given
by conditions (\ref{range}, \ref{range2}) and (\ref{range},
\ref{range3}), respectively. Each of the above integrals can be solved
analytically. The results are
\begin{equation*}
V_{GC}=C\frac{4+\mu_\sigma^{9/2}(9\mu_\sigma^2-13)}{18018\sqrt{2}\mu_\sigma^3},
\end{equation*}
\begin{equation*}
V_{EBC}=C\frac{\sqrt{\mu_\sigma}(1-\mu_\sigma)^2(11+9\mu_\sigma)}{18018\sqrt{2}},
\end{equation*}
\begin{equation*}
V_{ICBC}=C\frac{\sqrt{\mu_\sigma}\left(-13\mu_\sigma+9\mu_\sigma^3
-\frac{8\sqrt{2}(-11+7\mu_\sigma)}{(1+\mu_\sigma)^{7/2}}\right)}{18018\sqrt{2}}.
\end{equation*}
It is easy to see that the divergent part $C$ drops out when one considers a
ratio of volumes. Interestingly, such ratio still depends on the choice of
$\rho_\Omega$ in the Choi-Jamio{\l}kowski isomorphism through the
marginal purity $\mu_\sigma$. The relative volumes of the entanglement 
and incompatibility breaking channels are presented in Fig.~\ref{fig:volumes} as
dashed and solid lines, respectively. Both curves grow
monotonically as functions of $\mu_\sigma$. Two points, $\mu_\sigma=0$
and $\mu_\sigma=1$, have to be excluded from our considerations, even
though the curves seem to behave well. The former point would
correspond to the maximally entangled $\rho_\Omega$, which is not a
trace-class operator. The latter point does not satisfy the conditions in
Lemma \ref{L1}.

\begin{figure}[ht!]
  \includegraphics[width=0.45\textwidth]{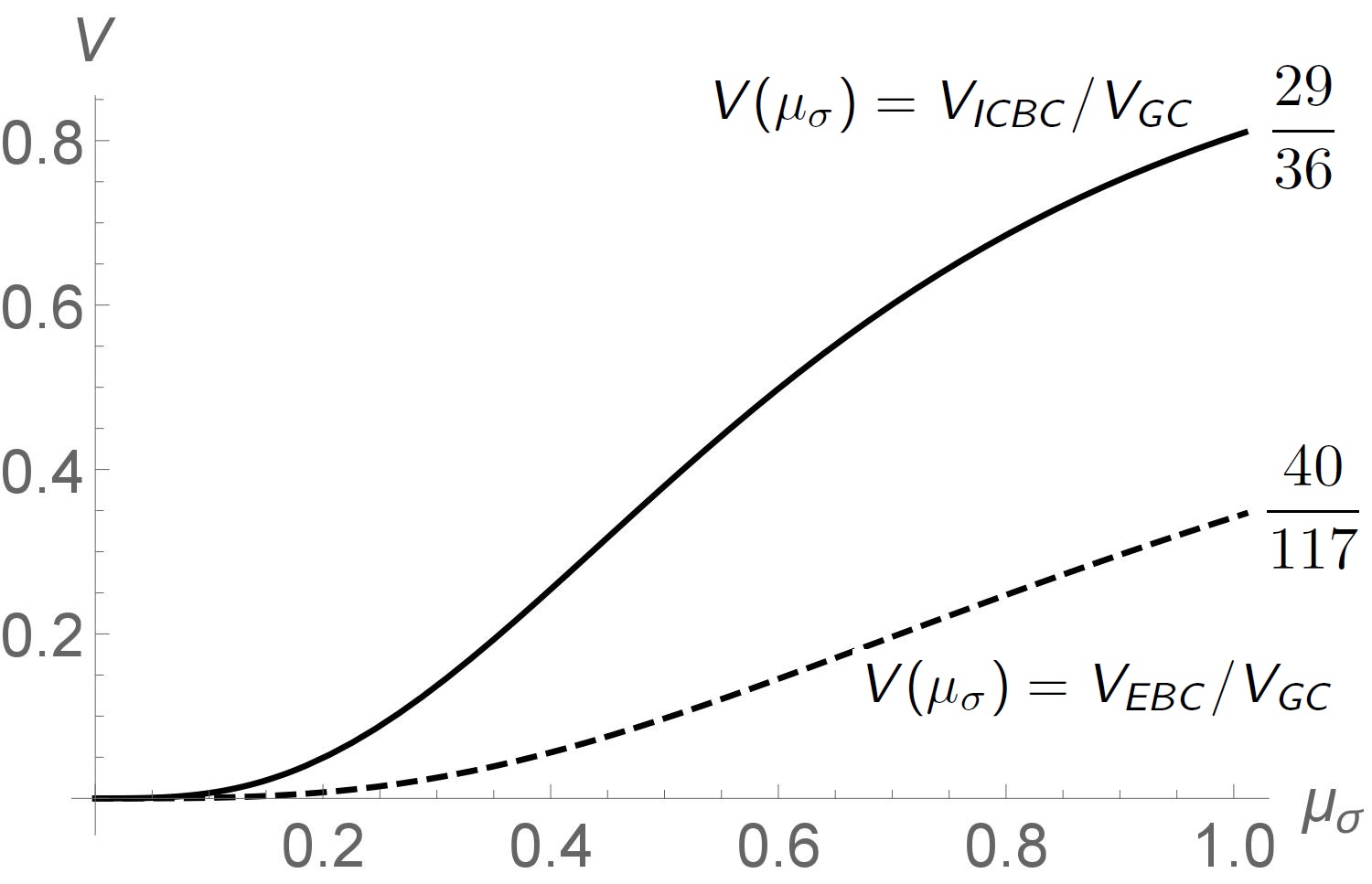}
  \caption{\label{fig:volumes}The relative volume of the entanglement breaking
    (dashed line) and incompatibility breaking (solid line) one-mode Gaussian channels 
    as a function of the marginal purity of the CJ state.}
\end{figure}

Now, assume that our knowledge about the two-mode Gaussian CJ state is
limited to the values of total $\mu$ and marginal $\mu_A$,
$\mu_\sigma$ purities. It turns out that even without knowing the value of the
seralian $\Delta$, we can usually tell whether a given one-mode
Gaussian channel is entanglement breaking or not. 
For fixed $\mu_A$ and $\mu_\sigma$, we see that 
if the total purity belongs to the range
\begin{equation*}
\mu_A\mu_\sigma\leq\mu\leq\frac{\mu_A\mu_\sigma}{\mu_A+\mu_\sigma-\mu_A\mu_\sigma}
\end{equation*}
or
\begin{equation*}
\frac{\mu_A\mu_\sigma}{\sqrt{\mu_A^2+\mu_\sigma^2-\mu_A^2\mu_\sigma^2}}
\leq\mu\leq\frac{\mu_A\mu_\sigma}{\mu_A\mu_\sigma+|\mu_A-\mu_\sigma|},
\end{equation*}
then the associated two-mode Gaussian CJ states are separable or
entangled, respectively \cite{Adesso2004}. There also exists the so-called {\it
coexistence region}, which corresponds to
\begin{equation*}
\frac{\mu_A\mu_\sigma}{\mu_A+\mu_\sigma-\mu_A\mu_\sigma}\leq\mu\leq
\frac{\mu_A\mu_\sigma}{\sqrt{\mu_A^2+\mu_\sigma^2-\mu_A^2\mu_\sigma^2}}.
\end{equation*}
For such values of $\mu_{A/\sigma}$ and $\mu$, it is impossible to
distinguish between the separable and entangled states without the
full knowledge about the system. 
Having the expressions for the 
volumes at hand, we can even compute the probability of finding an entangled 
CJ state in the coexistence region (see the density plots in Fig. 3).

In Fig.~\ref{fig:cond_prob} we plot the separability (double-hatched), coexistence
(single-hatched), and entanglement (shaded) regions for the Gaussian
CJ states as functions of $\mu$ and $\mu_A$ for fixed $\mu_\sigma$.
The unhatched white region is unphysical. Interestingly, knowing the
value of seralian is not necessary to determine the steerability of
Gaussian states. The shading indicates the relative conditional
volume of entangled Gaussian CJ states. This is computed as a ratio of
the volume of entangled states with respect to the volume of all
states for fixed purities. 

Discussing these figures in terms of the one-mode Gaussian quantum channels,
we can simply read off the incompatibility and entanglement breaking regions
in the parameter space. When $\mu\leq\mu_A$, the channel is incompatibility breaking.
All the channels in the double-hatched region are entanglement breaking. 
For the single-hatched region, the color coding gives the probability for the 
channel not to be entanglement breaking. 
There are no entanglement breaking channels in the dark blue non-hatched region,
and there are not incompatibility breaking channels when $\mu>\mu_A$.

\begin{figure}[ht!]
       \includegraphics[width=0.33\textwidth]{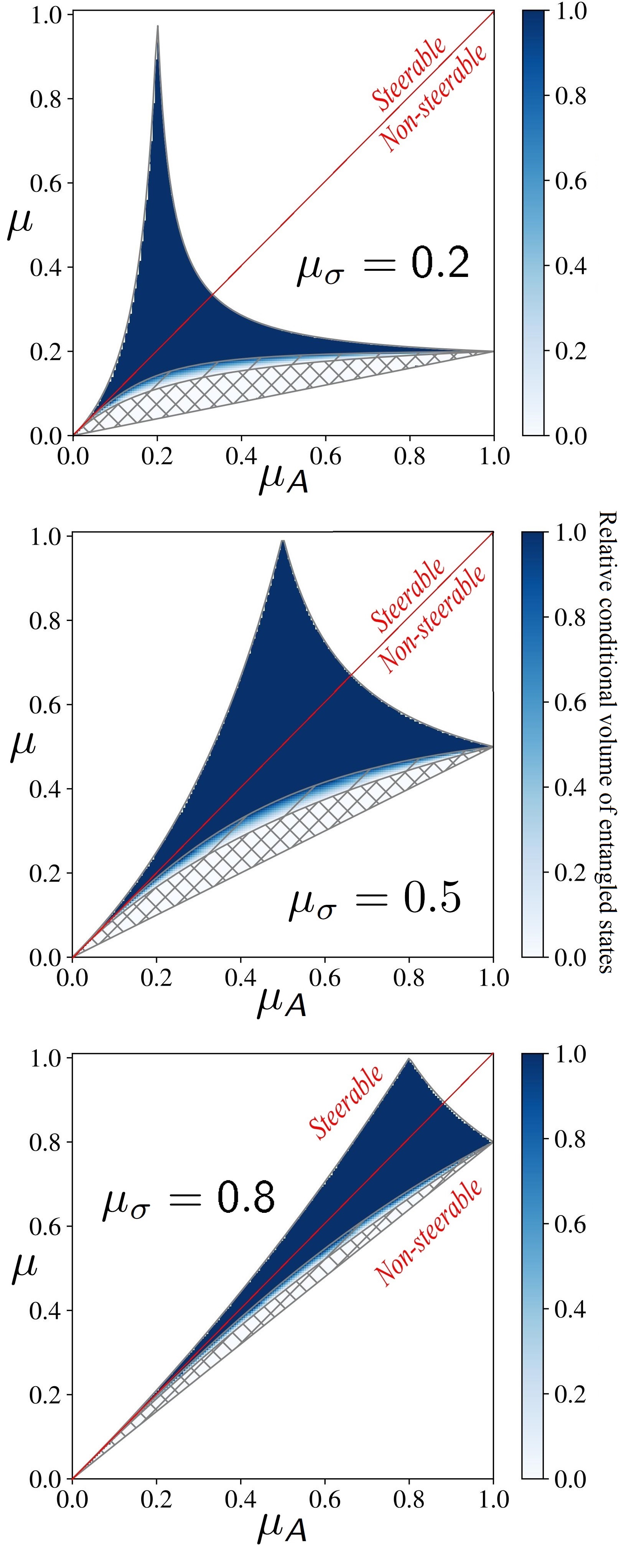}
       \caption{\label{fig:cond_prob}The separability (double-hatched), coexistence (single-hatched), and 
         entanglement (unhatched) regions for the two-mode
         Gaussian CJ states for different marginal purities: $\mu_\sigma=0.2$ (top), $\mu_\sigma=0.5$ (center), and $\mu_\sigma=0.8$ (bottom).
         The red line separates the steerable and non-steerability states. 
         The shading in the coexistence region is  associated with the conditional 
         relative volume of entangled CJ states.  
         In terms of the one-mode channels, the regions 
         correspond to the incompatibility breaking channels (non-steerable CJ states), entanglement
         breaking channels (double-hatched region), and the coexistence of entanglement breaking and not entanglement 
         breaking channels (single-hatched region).}
\end{figure}
%
%
%
%
%
%
%
%
\section{Conclusions and outlook}
In this paper we calculate the relative volumes of entanglement and
incompatibility breaking one-mode Gaussian quantum channels. We use the
Choi-Jamio{\l}kowski isomorphism to define the geometry in the space 
of Gaussian quantum channels. We explicitly determine the Hilbert-Schmidt line
and volume elements for one-mode Gaussian channels, together with
the regions corresponding to completely positive maps, as well as entanglement breaking
and incompatibility breaking channels. Interestingly, these regions 
are completely characterized by inequalities involving only
two channel parameters: $\det M$ and $\det N$. 
We find it useful to express the volume element in terms of symplectic 
invariants.

We base all of our calculations on the Hilbert-Schmidt
distance. It would be interesting to compare our results with the
volumes obtained from  the Fisher-Rao and Bures-Fisher metrics.
We would also like to explore the connection to a geometry of channels 
defined with 
the help of probe states~\cite{Monras2010}.

Geometrical concepts are also relevant for  
discrimination, distinguishability, and tomography of Gaussian channels. 
Having the general framework at hand, 
it is now straightforward to study the geometrical properties of
subclasses of special interest, such as  
Weyl-covariant and quantum limited channels. Another open problem is finding relative volumes for canonical 
classes of channels, as introduced by 
Holevo in \cite{Holevo2007} and further explored
in~\cite{Caruso_2006}.

\section*{Acknowledgments} 
K.S. was supported by the Polish National Science Centre project No. 2018/28/T/ST2/00008.
The authors would like to thank Dariusz Chru{\'s}ci{\'n}ski, Valentin Link, Moritz Richter, and Roope Uola for stimulating discussions.

\appendix
\section{Hilbert-Schmidt line element}\label{sec:hilbert-schmidt-line}
Our computation follows the lines of~\cite{Link2015,Sohr_2018}. From 
our definition, the Hilbert-Schmidt line element reads
\begin{equation}\label{dhs2}
\begin{split}
\der s^2=&\Tr[\rho(\Sigma+\der\Sigma,\ell+\der\ell)]^2
+\Tr[\rho(\Sigma,\ell)]^2\\&
-2\Tr[\rho(\Sigma,\ell)\rho(\Sigma+\der\Sigma,\ell+\der\ell)].
\end{split}
\end{equation}
Observe that
\begin{equation}
\begin{split}
\Tr[\rho(\Sigma,\ell)&\rho(\Sigma^\prime,\ell^\prime)]
=\frac{1}{\sqrt{\det\frac 12 (\Sigma+\Sigma^\prime)}}\\
&\times\exp\left[-\frac 12 (\ell-\ell^\prime)^T(\Sigma+\Sigma^\prime)^{-1}(\ell-\ell^\prime)\right],
\end{split}
\end{equation}
where we used the property of the trace
$
\Tr\left[D(-\xi)D(-\xi^\prime)\right]=\pi^{n}\delta^{2n}(\xi+\xi^\prime)
$
and the $2n$-dimensional Gaussian integral \cite{byron2012mathematics}
\begin{equation}
\begin{split}
\int_{\mathbb{R}^{2n}}\frac{\der^{2n}\xi}{\pi^n}\exp&\left[-\frac 12 \xi^TA\xi+B^T\xi\right]\\&=\frac{2^n}{\sqrt{\det A}}\exp\left[\frac 12 B^TA^{-1}B\right].
\end{split}
\end{equation}
Hence, eq. (\ref{dhs2}) simplifies to
\begin{equation}
\begin{split}
\der s^2=&\frac{1}{\sqrt{\det\Sigma}}
+\frac{1}{\sqrt{\det(\Sigma+\der\Sigma)}}-\frac{2}{\sqrt{\det\frac 12 (2\Sigma+\der\Sigma)}}\\&
\times\exp\left[-\frac 12 \der\ell^T(2\Sigma+\der\Sigma)^{-1}\der\ell\right].
\end{split}
\end{equation}
By expanding the matrix $(2\Sigma+\der\Sigma)^{-1}\simeq\frac 12 (\mathbb{I}-\frac 12 \Sigma^{-1}\der\Sigma)\Sigma^{-1}$ and the exponential
\begin{equation}
\begin{split}
\exp&\left[-\frac 12 \der\ell^T(2\Sigma+\der\Sigma)^{-1}\der\ell\right]\\
&\simeq
1-\frac 12 \der\ell^T(2\Sigma+\der\Sigma)^{-1}\der\ell\\
&\simeq
1-\frac 14 \der\ell^T\Sigma^{-1}\der\ell
\end{split}
\end{equation}
up to the quadratic terms in $\der\Sigma$, $\der\ell$, we find the final formula for the line element.

\section{Hilbert-Schmidt volume element}\label{sec:hilb-schm-volume}
The one-mode Gaussian channels correspond to the two-mode Gaussian CJ states with
\begin{equation}
\Sigma=
\begin{pmatrix}
N+M^T\Sigma_\sigma M & M^TS_\sigma^TZ_\sigma S_\sigma \\
S_\sigma^TZ_\sigma S_\sigma M & \Sigma_\sigma
\end{pmatrix}.
\end{equation}
There always exists a symplectic transformation $S_A$ such that $N+M^T\Sigma_\sigma M=\nu_AS_A^TS_A$. Hence, we can write
\begin{equation}
\Sigma=(S_A\oplus S_\sigma)^T
\begin{pmatrix}
\nu_A\mathbb{I}_2 & S_A^{-1}M^TS_\sigma^TZ_\sigma \\
Z_\sigma S_\sigma MS_A^{-1} & \nu_\sigma\mathbb{I}_2
\end{pmatrix}
(S_A\oplus S_\sigma).
\end{equation}
The off-diagonal block has the singular value decomposition
$S_A^{-1}M^TS_\sigma^TZ_\sigma=Q^T{\Gamma}R$ with two
orthogonal matrices $Q$, $R$ and
${\Gamma}:=\mathrm{diag}(\gamma_+,\gamma_-)$
\cite{Duan2000}. Therefore, one has
\begin{equation}
\Sigma=(S_A^{\prime T}\oplus S_\sigma^TR^T)W(S_A^{\prime}\oplus RS_\sigma)
\end{equation}
with $S_A^\prime:=QS_A$ and
\begin{equation}
W=\begin{pmatrix}
\nu_A\mathbb{I}_2 & {\Gamma} \\
{\Gamma} & \nu_\sigma\mathbb{I}_2
\end{pmatrix}.
\end{equation}
The line element follows from eq. (\ref{line}). We use the fact that
the covariance matrix has the structure $\Sigma=S^TWS$ with symplectic
$S=\exp(H)$ that is generated by a traceless Hamiltonian matrix
$H$. This gives
\begin{equation}
\der\Sigma=S^T(\der W+\der H^TW+W\der H)S,
\end{equation}
and therefore
\begin{equation}
\begin{split}
\der s^2=&
\frac{1}{16\sqrt{\det W}}\Big\{2\Tr[W^{-1}(\der W+\der H^TW+R\der W)]^2\\&
+[\Tr(W^{-1}\der W)]^2\Big\}.
\end{split}
\end{equation}
In the last step, we perform the change of coordinates to the purity-seralian coordinates in eq. (\ref{pur}).

\bibliography{TypicalGaussian2}

\begin{thebibliography}{10}
\providecommand{\url}[1]{\texttt{#1}}
\providecommand{\urlprefix}{URL }
\providecommand{\eprint}[2][]{\url{#2}}

\bibitem{RevModPhys.86.419}
N.~Brunner, D.~Cavalcanti, S.~Pironio, V.~Scarani, and S.~Wehner, Rev. Mod.
  Phys. \textbf{86}, 419 (2014).

\bibitem{2019arXiv190306663U}
R.~Uola, A.~C.~S. Costa, H.~C. Nguyen, and O.~G{\\"o}hne, \textit{Quantum
  Steering} (2019), arXiv:1903.06663 [quant-ph].

\bibitem{RevModPhys.81.865}
R.~Horodecki, P.~Horodecki, M.~Horodecki, and K.~Horodecki, Rev. Mod. Phys.
  \textbf{81}, 865 (2009).

\bibitem{Bae_2015}
J.~Bae and L.-C. Kwek, J. Phys. A: Math. Theor. \textbf{48}, 083001 (2015).

\bibitem{Razavi:19}
M.~Razavi, A.~Leverrier, X.~Ma, B.~Qi, and Z.~Yuan, JOSA B \textbf{36},
  QKD1--QKD2 (2019).

\bibitem{Yu598}
T.~Yu and J.~H. Eberly, Science \textbf{323}, 598 (2009).

\bibitem{breuer2002theory}
H.-P. Breuer and F.~Petruccione, \textit{The Theory of Open Quantum Systems},
  Oxford University Press, Oxford 2003.

\bibitem{doi:10.1142/S0129055X03001709}
M.~Horodecki, P.~W. Shor, and M.~B. Ruskai, Rev. Math. Phys. \textbf{15},
  629--641 (2003).

\bibitem{Heinosaari_2015}
T.~Heinosaari, J.~Kiukas, D.~Reitzner, and J.~Schultz, J. Math. Phys. A: Math.
  Theor. \textbf{48}, 435301 (2015).

\bibitem{Heinosaari_2016}
T.~Heinosaari, T.~Miyadera, and M.~Ziman, J. Math. Phys. A: Math. Theor.
  \textbf{49}, 123001 (2016).

\bibitem{Ferraro2005}
A.~Ferraro, S.~Olivares, and M.~G.~A. Paris, \textit{Gaussian states in
  continuous variable quantum information} (2005), arXiv:quant-ph/0503237.

\bibitem{Olivares2012}
S.~Olivares, Eur. Phys. J. Spec. Top. \textbf{203}, 3--24 (2012).

\bibitem{Holevo}
A.~S. Holevo, IEEE Trans. Info. Theor. \textbf{44}, 269--273 (1998).

\bibitem{Adesso2014}
G.~Adesso, S.~Ragy, and A.~R. Lee, Open. Syst. Inf. Dyn. \textbf{21}, 1440001
  (2014).

\bibitem{Braunstein2005}
S.~L. Braunstein and P.~van Loock, Rev. Mod. Phys. \textbf{77}, 513 (2005).

\bibitem{Weedbrook2012}
C.~Weedbrook, S.~Pirandola, R.~Garcia-Patron, N.~J. Cerf, T.~C. Ralph, J.~H.
  Shapiro, and S.~Lloyd, Rev. Mod. Phys. \textbf{84}, 621 (2012).

\bibitem{Holevo2008}
A.~S. Holevo, Probl. Inform. Transm. \textbf{44}, 171--184 (2008).

\bibitem{doi:10.1063/1.4928044}
T.~Heinosaari, J.~Kiukas, and J.~Schultz, J. Math. Phys. \textbf{56}, 082202
  (2015).

\bibitem{Link2015}
V.~Link and W.~T. Strunz, J. Phys. A: Math. Theor. \textbf{48}, 275301 (2015).

\bibitem{Felice2017}
D.~Felice, M.~H. Quang, and S.~Mancini, J. Math. Phys. \textbf{58}, 012201
  (2017).

\bibitem{Dahlsten2014}
O.~C.~O. Dahlsten, C.~Lupo, S.~Mancini, and A.~Serafini, J. Phys. A: Math.
  Theor. \textbf{47}, 363001 (2014).

\bibitem{Serafini2007}
A.~Serafini, O.~C.~O. Dahlsten, D.~Gross, and M.~B. Plenio, J. Phys. A: Math.
  Theor. \textbf{40}, 9551 (2007).

\bibitem{Sohr_2018}
P.~Sohr, V.~Link, K.~Luoma, and W.~T. Strunz, J. Phys. A: Math. Theor.
  \textbf{52}, 035301 (2018).

\bibitem{Monras2010}
A.~Monras and F.~Illuminati, Phys. Rev. A \textbf{81}, 062326 (2010).

\bibitem{CHOI1975285}
M.-D. Choi, Linear Algebra Appl. \textbf{10}, 285--290 (1975).

\bibitem{JAMIOLKOWSKI1972275}
A.~Jamio{\l}kowski, Rep. Math. Phys. \textbf{3}, 275--278 (1972).

\bibitem{Holevo2011}
A.~S. Holevo, J. Math. Phys. \textbf{52}, 042202 (2011).

\bibitem{Kiukas2017}
J.~Kiukas, C.~Budroni, R.~Uola, and J.-P. Pellonp{\"a}{\"a}, Phys. Rev. A
  \textbf{96}, 042331 (2017).

\bibitem{PhysRevA.97.022339}
K.~K. Sabapathy, J.~S. Ivan, R.~Garc{\'i}a-Patr{\'o}n, and R.~Simon, Phys. Rev.
  A \textbf{97}, 022339 (2018).

\bibitem{HILLERY1984121}
M.~Hillery, R.~O'Connell, M.~Scully, and E.~Wigner, Phys. Rep. \textbf{106},
  121--167 (1984).

\bibitem{Holevo2007}
A.~S. Holevo, Probl. Inform. Transm. \textbf{43}, 1--11 (2007).

\bibitem{holevo2011probabilistic}
A.~S. Holevo, \textit{Probabilistic and Statistical Aspects of Quantum Theory},
   Edizioni Della Normale 2011.

\bibitem{Bengtsson2006}
I.~Bengtsson and K.~\.{Z}yczkowski, \textit{Geometry of Quantum States: An
  Introduction to Quantum Entanglement},  Cambridge University Press, Cambridge
  2007.

\bibitem{Adesso2004}
G.~Adesso, A.~Serafini, and F.~Illuminati, Phys. Rev. Lett. \textbf{92}, 087901
  (2004).

\bibitem{Simon2000}
R.~Simon, Phys. Rev. Lett. \textbf{84}, 2726 (2000).

\bibitem{PhysRevLett.77.1413}
A.~Peres, Phys. Rev. Lett. \textbf{77}, 1413 (1996).

\bibitem{HORODECKI1997333}
P.~Horodecki, Phys. Lett. A \textbf{232}, 333--339 (1997).

\bibitem{Adesso2007}
G.~Adesso and F.~Illuminati, J. Phys. A: Math. Theor. \textbf{40}, 7821 (2007).

\bibitem{Caruso_2006}
F.~Caruso, V.~Giovannetti, and A.~S. Holevo, New J. Phys. \textbf{8}, 310
  (2006).

\bibitem{byron2012mathematics}
F.~Byron and R.~Fuller, \textit{Mathematics of Classical and Quantum Physics},
  Dover Publications, New York 2012.

\bibitem{Duan2000}
L.-M. Duan, G.~Giedke, J.~I. Cirac, and P.~Zoller, Phys. Rev. Lett.
  \textbf{84}, 2722 (2000).

\end{thebibliography}
\bibliographystyle{beztytulow2}

\end{document}